\documentclass[manuscript]{acmart}

\usepackage{mathtools}
\usepackage{tikz}
\usepackage{adjustbox}
\usepackage{soul}
\usepackage{booktabs}
\usepackage[flushleft]{threeparttable}
\usepackage[ruled,vlined,linesnumbered]{algorithm2e}
\usetikzlibrary{quantikz}
\usepackage{cleveref}

\theoremstyle{acmplain}

\newtheorem{theorem}{Theorem}
\newtheorem{proposition}{Proposition}
\newtheorem{lemma}{Lemma}
\newtheorem{corollary}{Corollary}
\theoremstyle{acmdefinition}

\newtheorem{definition}{Definition}
\newtheorem{remark}{Remark}

\graphicspath{{./}{figures/}{../figures/}}

\providecommand{\ket}[1]{}\renewcommand{\ket}[1]{\lvert #1 \rangle}
\providecommand{\bra}[1]{}\renewcommand{\bra}[1]{\langle #1 \rvert}

\crefname{figure}{Figure}{Figures}
\Crefname{figure}{Figure}{Figures}
\crefname{table}{Table}{Tables}
\Crefname{table}{Table}{Tables}
\crefname{algocf}{Algorithm}{Algorithms}
\Crefname{algocf}{Algorithm}{Algorithms}

\usepackage[final,commandnameprefix=ifneeded]{changes}
\AtBeginDocument{\hypersetup{colorlinks=true, linkcolor=black, citecolor=black,
                            urlcolor=black, filecolor=black, menucolor=black}}
\setaddedmarkup{{\color{blue!75!black}#1}}
\setdeletedmarkup{{\color{red!65!black}[\,#1\,]}}
\usepackage{enumitem}
\setlist{itemsep=1pt, topsep=2pt, parsep=0pt}
\definechangesauthor[name={revision}, color=blue]{rev}
\definechangesauthor[name={meeting 2}, color=orange]{m2}

\newcommand{\numVerGrid}{41}
\newcommand{\numVerFormulaErr}{6.7\times10^{-16}}
\newcommand{\numVerCascadeErr}{3.9\times10^{-16}}

\newcommand{\numPowRmax}{3}
\newcommand{\numCostPZero}{0.599}
\newcommand{\numCostMargin}{0.0993}
\newcommand{\numCostShots}{187}
\newcommand{\numEnvQiskit}{2.5.2}
\newcommand{\numEnvAer}{0.17.2}
\newcommand{\numEnvPython}{3.11}
\newcommand{\numThrLevelLo}{0.19}
\newcommand{\numThrLevelHi}{0.81}
\newcommand{\numThrSep}{0.311}
\newcommand{\numThrRmax}{10.10}
\newcommand{\numThrRwrap}{11}
\newcommand{\numThrWrap}{3.42}
\newcommand{\numThrPowSep}{1.87}
\newcommand{\numOveRegShots}{202}

\newcommand{\numOveEstShots}{522}
\newcommand{\numOvePlainShots}{73\,781}

\newcommand{\numOveRatioEstShots}{2.59}

\newcommand{\numOveEstMaxPower}{32}
\newcommand{\numOveRatioPlain}{365}
\newcommand{\numOveSlopePlain}{-1.99}
\newcommand{\numOveSlopeEstShots}{-0.48}
\newcommand{\numOveSlopeReg}{-0.30}
\newcommand{\numOveAccFloor}{0.97}
\newcommand{\numOveAccTarget}{0.95}

\begin{document}

\replaced[id=m2]{\title{A Quantum Phase-based Comparator}}{\title{QuAC: A
Width-One Quantum Comparator for Amplitude-Encoded Values}}

\author{Alessandro Berti}
\orcid{0000-0001-9144-9572}
\affiliation{%
  \department{Department of Computer Science}
  \institution{University of Pisa}
  \city{Pisa}
  \country{Italy}}
\email{alessandro.berti1994@gmail.com}

\author{Alessandro Poggiali}
\orcid{0000-0002-1591-7925}
\affiliation{%
  \department{Department of Computer Science}
  \institution{University of Pisa}
  \city{Pisa}
  \country{Italy}}
\email{alessandro.poggiali@di.unipi.it}

\renewcommand{\shortauthors}{\footnotesize \textsc{\textbf{Preprint}}}

\begin{abstract}
Quantum comparators decide the order of two operands. They rely on reversible logic acting on basis-encoded integers, so their width grows with the precision. We introduce the Quantum Phase-based Comparator (QPC), which compares
two values carried in relative phases instead. The circuit places the two
phases, entered with opposite signs, between a pair of Hadamard gates, and a
fixed offset centers the interference, so that the probability of measuring
zero falls below one half exactly when the first phase is the smaller. We give
two implementations. With \emph{hard-coded values}, the two values are written
into the parameters of two phase gates, and the comparison costs one qubit and
five gates. With the \emph{register-driven cascade}, the values are drawn from
two $t$-qubit registers through binary-weighted controlled-phase gates, at a
cost of one qubit beyond the registers and depth linear in $t$; since the
cascade induces each phase linearly from the register content, one circuit
handles a superposition of operand pairs. The readout yields a biased coin
rather than a definite bit, and we quantify the shots that a decision takes at
a given phase separation. Scaling both phases by an integer widens the
decision margin at no cost in width or depth, and an adaptive doubling schedule
turns this amplification into a shot count logarithmic in the inverse
separation. 
\end{abstract}

\replaced[id=m2]{\keywords{Quantum Comparator, Phase encoding, Phase interference,
Hadamard test}}{\keywords{Quantum Comparator, Phase encoding, Phase interference,
Hadamard test}}

\maketitle

\section{Introduction}\label{sec:intro}

The predicate $x<y$ is a primitive in several application domains of quantum
algorithms. It thresholds pixel values in quantum image
processing~\cite{xia2019}, triggers the conditional subtraction of quantum
modular arithmetic~\cite{yuan2023}, and serves as the oracle that minimum- and
maximum-finding searches call $\mathcal{O}(\sqrt{M})$
times~\cite{durr1996,ahuja1999}. Almost every quantum comparator in the
literature evaluates this predicate on integers held in registers: reversible
comparator trees~\cite{thapliyal2010,vudadha2012}, bit-string
comparators~\cite{oliveira2007,shahzad2023} and Fourier-basis
subtraction~\cite{yuan2023,sahin2020} place the operands on $\mathcal{O}(t)$
qubits, where $t$ is the operand bit width, and return a definite bit. Such a
comparator needs a binary encoding of each value before it can start, so the
precision of the operands fixes the width of the comparison.

This work presents the Quantum Phase-based Comparator (QPC), which takes its
values as phases. The circuit writes the two phases $\varphi_x$ and
$-\varphi_y$ onto a single qubit held between two Hadamard gates, in the layout
of the Hadamard test~\cite{ekert2002} or of a Mach--Zehnder interferometer; the
second Hadamard gate turns the accumulated phase difference into an amplitude
difference between $\ket{0}$ and $\ket{1}$. A fixed $P(-\pi/2)$ gate shifts the
interference curve so that the outcome probability equals one half exactly when
the two phases agree, and this shift is what makes the interferometer a
comparator: the side of one half on which the probability falls gives the
order. The phases enter through commuting diagonal gates, and the two
instantiations differ only in how the circuit supplies them. In the first,
\emph{hard-coded values}, a classical computation supplies the phases, and the
comparison costs one qubit and five gates in total; this is the width-one
instantiation. In the second, the \emph{register-driven cascade}, two
$t$-qubit registers supply them through a binary-weighted cascade of
controlled-phase gates, at a cost of one qubit beyond those registers and depth
$\mathcal{O}(t)$; by linearity, this circuit compares a superposition of
operand pairs.
The output of QPC is statistical. A single shot returns a biased coin whose
bias carries the order, not a definite bit, and two operands arbitrarily close
to each other need arbitrarily many shots to separate. We quantify that
cost and then reduce it: scaling both phases by an integer widens the margin at
no cost in width or depth, and an adaptive doubling schedule turns this
amplification into a shot count logarithmic, rather than quadratic, in the
inverse separation.

\paragraph*{Contributions.}
This paper makes the following contributions.
\begin{itemize}
\item We prove that a single-qubit, constant-depth circuit decides the order of
two encoded phases, and that any strictly increasing encoding whose phases span
less than $\pi$ thereby decides the order of the values.
\item We characterize the readout as a statistical decision rule and derive its
shot complexity from the decision margin.
\item We extend the comparator to values held in registers, possibly in
superposition, and prove a rounding window for the finite-precision encoding.
We further prove that scaling the phases by an integer amplifies the margin at
no cost in width or depth, and that an adaptive schedule turns this
amplification into a logarithmic shot count.
\item We implement the comparator in Qiskit~\cite{qiskit2024} and release the
implementation, together with the verification code and the experiment
scripts, in a public
repository.\footnote{\url{https://github.com/AlessandroPoggiali/Quantum_Phase-based_Comparator}}
\end{itemize}

The manuscript is organized as follows. \Cref{sec:related} places QPC among reversible comparators and single-qubit
interference readouts. \Cref{sec:prelim} fixes notation, states the decision
problem, introduces the phase encoding and its scale, and recalls the binary-weighted cascade of controlled
rotations. \Cref{sec:input-model} presents the
comparator: the phase interferometer and its correctness,
the shot cost of the statistical readout, and the two
circuits that realize the primitive, with the rounding window of the
register-driven one. \Cref{sec:accuracy} shows that
scaling the phases by an integer widens the margin at no cost in width or
depth and turns this amplification into a logarithmic
shot count through the adaptive schedule.
\Cref{sec:exp-setup} validates the circuits by simulation, measures the
adaptive readout against plain sampling and the estimate-both baseline, and
applies the comparator to thresholding an image held in superposition.
\Cref{sec:conclusion} concludes.

\section{Related Work}\label{sec:related}

The encoding of the operands divides quantum comparators into two families:
reversible logic compares basis-encoded integers, while interference reads
values held in phases or amplitudes. \Cref{tab:resources} sets both families
side by side with the two supply modes of the phase-based family developed
here, along four axes: width, measured in qubits; depth; gate count; and output
type.

\paragraph*{Comparators for basis-encoded integers.}
This family holds its operands in registers of $\mathcal{O}(t)$ qubits and
compares them with reversible logic. Comparator trees decide order and equality
bitwise~\cite{thapliyal2010,vudadha2012}; bit-string designs flag the relations
into ancillas with Toffoli and CNOT logic~\cite{oliveira2007,shahzad2023}; and
Fourier-basis designs subtract by phase rotations and read the sign of the
difference from a single ancilla~\cite{yuan2023,sahin2020}. The comparators of
Vandaele~\cite{vandaele2026} use $\Theta(t)$ gates and $\Theta(\log_2 t)$ depth
and attain a provably minimal qubit count over Clifford${+}$Toffoli, which
shows that exact basis-encoded comparison needs $\Omega(t)$ width. These
designs remain the exact tool whenever the operands already sit in the
computational basis and the algorithm needs a definite bit.

A second quantum route estimates the operands instead of comparing them
directly. Amplitude estimation~\cite{brassard2002} and its QFT-free
refinements~\cite{suzuki2020,grinko2021} recover an amplitude with a number of
applications of the preparation circuit that scales inversely with the target
error, and Kitaev's phase estimation~\cite{kitaev1995} reads a phase through
powers of the underlying unitary. Estimating both operands and comparing the
estimates classically gives the \emph{estimate-both} baseline against which we
set our readout. The baseline is naive by design: it reconstructs both values
where the order relation alone would suffice.

\paragraph*{Single-qubit interference.}
The mechanism used here is the standard single-qubit interference primitive, a
phase between two Hadamard gates in the layout of the Hadamard
test~\cite{ekert2002}. Closest in mechanism is~\cite{ohno2026}, which encodes
real values as angles and reads a trigonometric function of them through a
Hadamard test; it estimates a continuous similarity scalar, whereas we decide a
thresholded order, so the two answer different questions. Single-qubit readout
also appears in quantum classifiers~\cite{park2021,larose2020,schuld2017},
which classify one input against a learned boundary rather than comparing two
encoded inputs.

Deciding the order rather than reconstructing the values is what lets us build
the comparator around a single interference qubit, and the results of this
paper follow from that choice: a finite-precision guarantee on the decision
(\Cref{prop:window}), a margin that widens at no cost in width or depth
(\Cref{prop:powering}), and a register-driven form that carries the whole
construction over to operand pairs held in superposition (\Cref{sec:register}).

\begin{table}[t]
\centering
\begin{threeparttable}
\caption{Quantum comparison as a resource design space. For the basis-encoded
designs $t$ is the operand width; for our register extension it is the precision
with which the register represents the value, and $\varepsilon$ is the target
additive error of an estimate. Gate counts are analytic, each row over the gate
set of its own construction: Clifford${+}$Toffoli for the reversible designs
of~\cite{thapliyal2010, oliveira2007, vandaele2026}, Clifford with controlled
phase rotations for~\cite{yuan2023}, and for ours Hadamard with uncontrolled
phase gates when the values are hard-coded and Hadamard with controlled-phase
gates in the cascade. A \emph{statistical} output is one obtained from the
measurement frequencies of the comparator qubit. Our two rows differ only in how
the circuit receives its operands: the width-one row applies to hard-coded
values, the register-driven row to the cascade.
}
\label{tab:resources}
\small
\setlength{\tabcolsep}{4pt}
\begin{tabular*}{\linewidth}{@{\extracolsep{\fill}}lccccc@{}}
\toprule
\textbf{Construction} & \textbf{Operands} & \textbf{Qubits} & \textbf{Depth} & \textbf{Gates} & \textbf{Output} \\
\midrule
Reversible tree~\cite{thapliyal2010} & integer & $\mathcal{O}(t)$ & $\mathcal{O}(\log_2 t)$ & $\Theta(t)$ & bit \\
Bit-string~\cite{oliveira2007}       & integer & $\mathcal{O}(t)$ & $\mathcal{O}(t)$ & $\Theta(t)$ & bit \\
QFT-based~\cite{yuan2023}            & integer & $2t{+}1$ & $\mathcal{O}(t^2)$ & $\Theta(t^2)$ & bit \\
Ancilla-free~\cite{vandaele2026}     & integer & $2t{+}1$ & $\Theta(\log_2 t)$ & $\Theta(t)$ & bit \\
Estimate-both~\cite{brassard2002,suzuki2020,grinko2021} & amplitude & $\mathcal{O}(t)$ & $\mathcal{O}(1/\varepsilon)$ & n/a & binary values \\
\midrule
\multicolumn{6}{l}{\textbf{QPC (this work)}} \\
\quad Hard-coded values (\Cref{fig:W}) & phase & $\mathbf{1}$ & $\boldsymbol{\mathcal{O}(1)}$ & $\mathbf{5}$ & statistical \\
\quad Register-driven cascade (\Cref{fig:comparator-cascade}) & integer & $2t{+}1$ & $\mathcal{O}(t)$ & $2t{+}3$ & statistical \\
\bottomrule
\end{tabular*}
\end{threeparttable}
\end{table}

\section{Preliminaries}\label{sec:prelim}

We assume basic knowledge of quantum computing and refer to~\cite{nielsen2010}
for an introduction. We write $\ket{\psi}_{\mathrm{name}}^{\mathrm{size}}$,
where the subscript names the register and the superscript gives its size in
qubits, and we omit either when the context makes it clear. Throughout, the
construction relies on the phase gate
$P(\theta)=\bigl[\begin{smallmatrix}1&0\\0&e^{i\theta}\end{smallmatrix}\bigr]$,
and in particular on its additive composition property
$P(\theta_1)P(\theta_2)=P(\theta_1+\theta_2)$. \Cref{def:problem} states the
problem we address.

\begin{definition}[Encoded-order decision]\label{def:problem}
Let $\varphi$ map a value domain $\mathcal{V}\subseteq\mathbb{R}$ into phases
strictly increasingly, so that $x<y$ holds exactly when
$\varphi(x)<\varphi(y)$ for all $x,y\in\mathcal{V}$. Suppose the circuit
receives two values $x,y\in\mathcal{V}$ only as the phases
$\varphi_x=\varphi(x)$ and $\varphi_y=\varphi(y)$, and fix a confidence
$1-\delta$ and a resolution $\eta>0$. The circuit returns one bit, equal to $1$
when $x<y$ and to $0$ otherwise, which is correct with probability at least
$1-\delta$ whenever the phase separation $d=\varphi_x-\varphi_y$ satisfies
$|d|\ge\eta$. Closer pairs are left unresolved, and the tie $x=y$ is excluded.
\end{definition}

The exact comparators of~\Cref{tab:resources}, those that return a definite
bit rather than a statistical one, decide the same order relation at
$\delta=0$ and $\eta=0$.

\subsection{Encodings}\label{sec:encodings}

\emph{Basis encoding} maps a $t$-bit string to the corresponding computational
basis state of a $t$-qubit register; the string $101$ maps to $\ket{101}$.
Throughout this paper, basis encoding stores the phase $\varphi$ of a value in
fixed-point binary over an angle range $A$: the representable phases form the
grid $\{k\,A/2^{t}: k=0,\dots,2^{t}-1\}$ of step $A/2^{t}$, and a register
$\ket{\varphi}_a^t$ holds $\varphi/A$ rounded to the nearest multiple of
$2^{-t}$, saturating at the largest grid point. Rounding therefore displaces a
phase by at most half a step, $A/2^{t+1}$, as long as the phase does not exceed
the largest grid point $A(1-2^{-t})$; above it, saturation displaces it by
more.

The convention is a register spanning the whole circle, $A=2\pi$, but the
encoding used here never needs that much, and an $A$ that is too large wastes
qubits. The comparator consumes phases, so we encode a value $u$ on a data
interval $[a,b]$ through the linear map $\varphi(u)=c\,(u-a)$, anchored at the left endpoint so that $\varphi(a)=0$ and the phases start where the register grid starts. Two conditions fix
$c$. The phases must span less than a half turn, $|c|\,(b-a)<\pi$ (see
\Cref{thm:quac}), and the strict monotonicity of \Cref{def:problem} forces
$c>0$, since a negative scale keeps the span but inverts every decision. Write
$W=c\,(b-a)$ for the width of that span. Shifting both registers by the same
amount changes nothing, because the comparator sees only the difference: the
width of the data interval matters, its position does not.

The width $W$ and the precision then fix the register's angle range $A$. The
largest representable grid point is $A(1-2^{-t})$, so $A$ must satisfy
$A\,(1-2^{-t})\ge W$; otherwise every phase above that point lands on the same
grid point, and the comparator cannot order the values behind them. We take the
tightest such range, $A=W/(1-2^{-t})$, which exceeds the span by exactly one
grid step.

Although $W<\pi$ suffices for correctness, we assume $W<\tfrac{\pi}{2}$
throughout the paper. The margin $\tfrac12|\sin d|$ of \Cref{sec:cost} is increasing in
$|d|$ on $[0,\tfrac{\pi}{2}]$ and decreasing on $[\tfrac{\pi}{2},\pi]$, so
for $W>\tfrac{\pi}{2}$ the pairs with the largest separation have a smaller
margin than pairs with moderate separation and need more shots to resolve,
even though the decision remains correct.

\subsection{A Binary-Weighted Cascade of Controlled Rotations}\label{sec:cascade-prelim}

A single-qubit rotation through a parametric angle decomposes into fixed-angle
rotations controlled by a register that basis-encodes that angle~\cite{mottonen2005,mitarai2019adc,berti2025qram}.  \Cref{def:cascade}
recalls the construction.

\begin{definition}[Cascade of controlled rotations]\label{def:cascade}
Let $a$ be a $t$-qubit register, $b$ a target qubit, and $R(\cdot)$ a
single-qubit rotation obeying $R(\phi_1+\phi_2)=R(\phi_1)R(\phi_2)$. The
weights $\beta_0,\dots,\beta_{t-1}$ are a fixed, data-independent family of
angles, whose values \Cref{lemma:cascade} gives. The cascade
$\mathrm{C}_a R_{\mapsto b}=\prod_{i=0}^{t-1}\mathrm{C}_{a_i}R_{\mapsto b}(\beta_i)$
applies $R(\beta_i)$ to $b$ when the control qubit $a_i$ is in state $\ket{1}$
and the identity otherwise. For clarity, we omit tensor products with the
identity on the remaining register qubits.
\end{definition}

\begin{lemma}[Cascade of controlled rotations equivalence]\label{lemma:cascade}
Fix an \emph{angle range} $A>0$ and let $a=a_0a_1\cdots a_{t-1}$ basis-encode the
angle $\theta=\sum_{i=0}^{t-1}a_i\,\beta_i$ in fixed-point binary
representation, with the fixed, data-independent weights
\[
\beta_i \;=\; \frac{A}{2^{\,i+1}},\qquad a_0 \text{ the most significant bit.}
\]
Then the cascade of \Cref{def:cascade} with these fixed angles reproduces the
single parametric rotation,
\[
\mathrm{C}_a R_{\mapsto b}
=\prod_{i=0}^{t-1}\mathrm{C}_{a_i}R_{\mapsto b}\!\Big(\tfrac{A}{2^{\,i+1}}\Big)
= R_{\mapsto b}(\theta).
\]
\end{lemma}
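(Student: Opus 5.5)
The plan is to verify the identity on each computational basis state of the control register and then extend by linearity. Both sides act as the identity on $a$ and are block-diagonal in the computational basis of $a$, so this suffices. The point is that every gate in the cascade is controlled by a different qubit of $a$ and only touches $b$. On an input $\ket{a_0a_1\cdots a_{t-1}}_a\otimes\ket{\psi}_b$ with fixed bits $a_i\in\{0,1\}$, the factor $\mathrm{C}_{a_i}R_{\mapsto b}(\beta_i)$ therefore acts as $\mathbb{1}_a\otimes R(\beta_i)^{a_i}$. Here $R(\beta_i)^0=\mathbb{1}=R(0)$ and $R(\beta_i)^1=R(\beta_i)$, so we may write $R(\beta_i)^{a_i}=R(a_i\beta_i)$. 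The first consequence is that $R(0)=\mathbb{1}$. This follows from the homomorphism property $R(0)=R(0)R(0)$ together with invertibility of $R(0)$ as a unitary.

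Next I multiply the factors in order. Because the basis state of $a$ is never changed, the product over $i$ becomes $\ket{a}_a\otimes\prod_{i=0}^{t-1}R(a_i\beta_i)\ket{\psi}_b$. Applying the hypothesis $R(\phi_1+\phi_2)=R(\phi_1)R(\phi_2)$ repeatedly, by induction on $t$, collapses this to $R\bigl(\sum_i a_i\beta_i\bigr)$. The hypothesis also gives $R(\phi_1)R(\phi_2)=R(\phi_2)R(\phi_1)$, so the ordering of the cascade is irrelevant. Substituting $\beta_i=A/2^{i+1}$ gives the exponent $\sum_i a_iA/2^{i+1}=A\cdot(0.a_0a_1\cdots a_{t-1})_2$. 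This is exactly the fixed-point value $\theta$ that the register encodes over range $A$ with $a_0$ most significant, matching the grid $\{kA/2^t\}$ of \Cref{sec:encodings} with $k=\sum_i a_i2^{t-1-i}$.

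Finally, I read the resulting identity $\mathrm{C}_aR_{\mapsto b}\ket{a}\ket{\psi}=\ket{a}R(\theta_a)\ket{\psi}$ as the meaning of $R_{\mapsto b}(\theta)$ with $\theta$ supplied by the register, namely the operator $\sum_a\ket{a}\bra{a}\otimes R(\theta_a)$. Linearity then extends it to superpositions over $a$.

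The only delicate step is bookkeeping rather than mathematics. It consists of stating precisely what $R_{\mapsto b}(\theta)$ means when $\theta$ is register-dependent, that is, a uniformly controlled rotation. It also requires checking that the MSB convention and the weights $A/2^{i+1}$ produce the same grid index $k$ as the encoding section.
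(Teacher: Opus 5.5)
Your proof is correct and takes the same route the paper relies on. The paper gives no formal proof: it recalls the construction from the literature and illustrates it with the $t=3$, $A=2\pi$, $\ket{011}_a$ example, where the cascade fires exactly the rotations whose control bits are $1$ and their angles add up to $\theta$. Your argument (act on each basis state of $a$, use the additive property, including $R(0)$ being the identity, then extend by linearity as a uniformly controlled rotation) is a rigorous version of that example.
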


For $t=3$ and $A=2\pi$, the register $\ket{011}_a$ encodes
$\theta=\tfrac{\pi}{2}+\tfrac{\pi}{4}=\tfrac{3\pi}{4}$, and the cascade fires
the second and third rotations, whose angles sum to exactly $\tfrac{3\pi}{4}$.
Because the weights do not depend on the data, the gate layout is the same for
every input and only the control values change; each cascade has depth
$\mathcal{O}(t)$.

\section{The Quantum Phase-based Comparator}\label{sec:input-model}
This section presents the phase comparator. \Cref{sec:ph_int} states the core
primitive, a phase interferometer that decides the order of two values from the
phases that carry them, whatever mechanism places those phases on the qubit
(\Cref{thm:quac}), and discusses what rescaling the encoding costs in margin
(\Cref{rem:rescaling}). \Cref{sec:cost} quantifies the cost of the statistical
readout by bounding the number of shots that a decision at a given separation
requires.

\Cref{sec:circuits} then gives two circuits that realize the primitive. The
first hard-codes the two values into phase gates and uses one qubit and five
gates; the second loads them from two $t$-qubit registers through a cascade of
controlled-phase gates and also accepts registers in superposition. Both
circuits inherit the results of the first two subsections, which involve a
circuit only through the phases it accumulates.

\subsection{Phase Interferometer}\label{sec:ph_int}
QPC rests on a phase interferometer. A qubit enters in the superposition
$H\ket{0}$, the two values reach it as phases of opposite sign, $\varphi_x$ and
$-\varphi_y$, and a second Hadamard gate converts the accumulated phase into a
bias of the measurement outcome. On its own, this interferometer measures only
how far apart the two phases are: its outcome probability is an even function
of $\varphi_x-\varphi_y$, so swapping the two values leaves it unchanged. A
fixed offset of $-\tfrac{\pi}{2}$ breaks the symmetry. It places the tie at
probability exactly one half, where the response is steepest, and sends the
probability to one side of one half or the other according to which value is
smaller, as long as the two phases differ by less than $\pi$. \Cref{thm:quac}
states this for arbitrary phases, whatever mechanism places them on the qubit.
The readout is therefore statistical: each shot favors the correct outcome,
and \Cref{sec:cost} counts the shots that a reliable decision requires.

\begin{theorem}[Phase Interferometer]\label{thm:quac}
Let $\varphi_x,\varphi_y\in\mathbb{R}$ be two encoded phases, and set
$\phi=\varphi_x-\varphi_y-\tfrac{\pi}{2}$. Measuring $H\,P(\phi)\,H\ket{0}$ in
the computational basis gives
\begin{equation}\label{eq:p0}
P(\ket{0})=\tfrac12+\tfrac12\sin(\varphi_x-\varphi_y),
\end{equation}
and, whenever $|\varphi_x-\varphi_y|<\pi$, $P(\ket{0})<\tfrac12$ holds exactly
when $\varphi_x<\varphi_y$. Consequently, for any strictly increasing encoding
$\varphi$ whose phases span less than $\pi$, the same rule decides $x<y$ for
$\varphi_x=\varphi(x)$ and $\varphi_y=\varphi(y)$.
\end{theorem}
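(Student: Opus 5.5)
Three steps: a direct state-vector calculation, a sign analysis of $\sin$ on $(-\pi,\pi)$, and a transfer through the encoding. First, I compute $H\,P(\phi)\,H\ket{0}$ explicitly. Applying $H$ gives $\tfrac{1}{\sqrt2}(\ket0+\ket1)$, and $P(\phi)$ turns this into $\tfrac{1}{\sqrt2}(\ket0+e^{i\phi}\ket1)$. The second Hadamard then leaves amplitude $\tfrac12(1+e^{i\phi})$ on $\ket0$, so
\[
P(\ket0)=\tfrac14\,\lvert 1+e^{i\phi}\rvert^2=\tfrac12\bigl(1+\cos\phi\bigr).
\]
Substituting $\phi=\varphi_x-\varphi_y-\tfrac{\pi}{2}$ and using $\cos(\alpha-\tfrac{\pi}{2})=\sin\alpha$ gives \eqref{eq:p0}. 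Since $P(\phi)$ is built from the phase gates through the additive property $P(\theta_1)P(\theta_2)=P(\theta_1+\theta_2)$, only the net phase matters. This is what lets the result cover any mechanism that deposits $\varphi_x$, $-\varphi_y$ and $-\tfrac{\pi}{2}$ on the qubit.

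Second, I set $d=\varphi_x-\varphi_y$ with $|d|<\pi$. By \eqref{eq:p0}, $P(\ket0)<\tfrac12$ is equivalent to $\sin d<0$. On the open interval $(-\pi,\pi)$, $\sin d<0$ holds exactly for $d\in(-\pi,0)$, since $\sin$ vanishes there only at $0$ and is positive on $(0,\pi)$. Hence $P(\ket0)<\tfrac12$ if and only if $\varphi_x<\varphi_y$. It is worth noting that the hypothesis $|d|<\pi$ is sharp: at $d=\pm\pi$ the sign information is lost, and beyond it the sign flips. The same argument gives $P(\ket0)=\tfrac12$ exactly at the tie and $P(\ket0)>\tfrac12$ when $\varphi_x>\varphi_y$.

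Third, for the consequence, take values $x,y$ in the domain. Because the encoding's phases span less than $\pi$, both $\varphi(x)$ and $\varphi(y)$ lie in an interval of length less than $\pi$, so $|\varphi(x)-\varphi(y)|<\pi$ and the second step applies. Strict monotonicity, as in \Cref{def:problem}, gives $x<y\iff\varphi(x)<\varphi(y)$. Chaining the two equivalences gives $x<y\iff P(\ket0)<\tfrac12$.

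None of these steps is a real obstacle. The only points needing care are the sign bookkeeping in the trigonometric identity (checking that the offset is $-\tfrac{\pi}{2}$ and not $+\tfrac{\pi}{2}$, which would invert the rule) and reading "span less than $\pi$" so that it bounds the pairwise difference strictly.
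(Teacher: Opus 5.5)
Your proposal is correct and follows essentially the same route as the paper's proof: the same state-vector computation, the same sign analysis of the sine on $(-\pi,\pi)$, and the same transfer through the strictly increasing encoding. The one cosmetic difference is in the probability step, where you evaluate $\tfrac14\lvert 1+e^{i\phi}\rvert^2$ directly while the paper factors out the global phase $e^{i\phi/2}$ to reach $\cos^2\tfrac{\phi}{2}$.
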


\begin{proof}[Proof (interference calculation)]
The gate $P(\phi)$ is diagonal, so it acts only on the $\ket{1}$ branch, and
the circuit evolves as
\begin{align*}
\ket{0}\;\xrightarrow{\;H\;}\;
  &\tfrac{1}{\sqrt2}\bigl(\ket{0}+\ket{1}\bigr)\\
\xrightarrow{\;P(\phi)\;}\;
  &\tfrac{1}{\sqrt2}\bigl(\ket{0}+e^{i\phi}\ket{1}\bigr)\\
\xrightarrow{\;H\;}\;
  &\tfrac12\bigl[(1+e^{i\phi})\ket{0}+(1-e^{i\phi})\ket{1}\bigr]\\
=\;
  &e^{i\phi/2}\bigl[\cos\tfrac{\phi}{2}\ket{0}-i\sin\tfrac{\phi}{2}\ket{1}\bigr],
\end{align*}
where the last line factors $e^{i\phi/2}$ out of both amplitudes. The global
phase does not affect the measurement, so
\[
P(\ket{0})
=\cos^2\tfrac{\phi}{2}
=\tfrac12\bigl(1+\cos\phi\bigr)
=\tfrac12+\tfrac12\sin(\varphi_x-\varphi_y),
\]
where the last step uses $\cos(\theta-\tfrac{\pi}{2})=\sin\theta$. The offset
$-\tfrac{\pi}{2}$ in $\phi$ is thus what turns the cosine into a sine.

On $(-\pi,\pi)$ the sine vanishes only at $0$ and is negative exactly when its
argument is. Hence, whenever $|\varphi_x-\varphi_y|<\pi$, the probability
equals $\tfrac12$ only at the tie $\varphi_x=\varphi_y$, and
$P(\ket{0})<\tfrac12$ holds exactly when $\varphi_x<\varphi_y$.

Finally, let $\varphi_x=\varphi(x)$ and $\varphi_y=\varphi(y)$. A strictly
increasing encoding gives $\varphi_x<\varphi_y$ exactly when $x<y$, and a span
below $\pi$ keeps $|\varphi_x-\varphi_y|<\pi$, so the same rule decides $x<y$.
\end{proof}

\begin{remark}[Cost of rescaling]\label{rem:rescaling}
\Cref{thm:quac} admits any strictly increasing map of span below $\pi$; the
freedom that matters in practice is rescaling, and it costs little. For a pair
of values $x,y$, the linear encoding gives the phase separation
$d=\varphi_x-\varphi_y=c\,(x-y)$, so by \Cref{eq:p0} the pair has margin
$|P(\ket{0})-\tfrac12|=\tfrac12|\sin d|$. Replacing $c$ by $c/2$ halves the
separation to $d/2$, and the margin drops to $\tfrac12|\sin(d/2)|$. Since
$\sin d=2\sin(d/2)\cos(d/2)$, we have $|\sin(d/2)|\ge\tfrac12|\sin d|$: the
margin loses at most a factor of two at any separation, and the factor tends to
two as $d\to0$. Doubling the width of the data interval at the same phase span
therefore costs at most one halving of the margin.

\end{remark}



\subsection{Cost of the Statistical Readout}\label{sec:cost}

The comparator returns a probability rather than a bit, so a decision costs
shots. The fixed offset $-\tfrac{\pi}{2}$ places the outcome probability at
exactly $\tfrac12$ when the two phases agree, so the order is carried by the
side of $\tfrac12$ on which $P(\ket{0})$ falls, which is the problem of
\Cref{def:problem}. We therefore estimate $P(\ket{0})$ by the frequency
$\widehat{p}_0$ of $\ket{0}$ over $N$ shots and compare it with $\tfrac12$. The
distance $|\widehat{p}_0-\tfrac12|$ is the \emph{empirical margin}; it
estimates the \emph{margin}
\begin{equation}\label{eq:margin}
\Delta(\varphi_x,\varphi_y)\;=\;\bigl|P(\ket{0})-\tfrac12\bigr|\;=\;\tfrac12\,\bigl|\sin(\varphi_x-\varphi_y)\bigr|,
\end{equation}
which determines the number of shots a decision needs. The
margin~\eqref{eq:margin} is a probability offset, whereas the separation
$|\varphi_x-\varphi_y|$ is an angle; near a tie the two are proportional,
$\Delta\approx\tfrac12|\varphi_x-\varphi_y|$, and under $\varphi(u)=c\,(u-a)$ the
separation is exactly $c$ times the value gap.

Each shot is a Bernoulli trial with mean $P(\ket{0})$, and the margin is
bounded away from zero except at the tie $\varphi_x=\varphi_y$. By Hoeffding's
inequality, the empirical frequency over $N$ shots satisfies
\[
\Pr\bigl[\,|\widehat{p}_0-P(\ket{0})|\ge \Delta\,\bigr]\le 2e^{-2N\Delta^2},
\]
so the decision $\widehat{p}_0<\tfrac12$ agrees with $\varphi_x<\varphi_y$ with
probability at least $1-\delta$ as soon as
\begin{equation}\label{eq:shots}
N \;\ge\; \frac{\ln(2/\delta)}{2\,\Delta(\varphi_x,\varphi_y)^2}
   \;=\; \mathcal{O}\!\Big(\frac{1}{\sin^{2}(\varphi_x-\varphi_y)}\,\log_2\tfrac1\delta\Big).
\end{equation}

Away from ties the bound is a constant number of shots: the pair $x=0.8$,
$y=0.6$ gives $P(\ket{0})=\numCostPZero$ and a margin $\Delta=\numCostMargin$,
so at confidence $0.95$ the bound requires
$N\ge\ln(40)/(2\cdot\numCostMargin^{2})\approx\numCostShots$ shots. As the
separation closes, the margin vanishes linearly and the cost grows
quadratically in the inverse separation. In practice one fixes a target
resolution $\eta$, spends the $N$ of~\Cref{eq:shots} at the worst margin that
$\eta$ admits over the whole separation range $[\eta,W]$, namely
$\tfrac12\sin\eta$ when $W\le\tfrac{\pi}{2}$ and $\tfrac12\min(\sin\eta,\sin W)$
otherwise, and returns the side of $\tfrac12$; pairs closer than $\eta$ are
reported unresolved.

\subsection{Circuit implementations}\label{sec:circuits}
This section gives two circuits that implement the phase interferometer of
\Cref{thm:quac}. In the first, \emph{hard-coded values} (\Cref{sec:comparator},
\Cref{fig:W}), a classical computation supplies the two values and writes them
into two phase gates. In the second, the \emph{register-driven cascade}
(\Cref{sec:register}, \Cref{fig:comparator-cascade}), two $t$-qubit registers
supply them, possibly in superposition.

\begin{figure}[t]
\centering
\adjustbox{scale=0.95}{
\begin{quantikz}[row sep=0.3cm, column sep=0.35cm]
\lstick{$\ket{0}$} & \gate{H} & \gate{P(-\tfrac{\pi}{2})} & \gate{P(\varphi_x)} & \gate{P(-\varphi_y)} & \gate{H} & \meter{} &\cw \\
\end{quantikz}}
\caption{Hard-coded values. The circuit realizes \Cref{thm:quac} with the two
values written into the data gates $P(\varphi_x)$ and $P(-\varphi_y)$; the
fixed gate $P(-\tfrac{\pi}{2})$ centers the interference at
$\varphi_x=\varphi_y$, and the rule $P(\ket{0})<\tfrac12$ reads the order.%
}
\Description{A single-qubit circuit: a Hadamard gate, the phase gates P of
minus pi over two, P of phi x, and P of minus phi y, a second Hadamard gate,
and a computational-basis measurement.}
\label{fig:W}
\end{figure}

\begin{figure}[t]
\centering
\adjustbox{max width=\linewidth}{%
\begin{quantikz}[row sep=0.24cm, column sep=0.34cm]
\lstick[3]{$\ket{\varphi_x}_x^t$} & \qw & \qw & \ctrl{6} & \qw & \qw & \qw & \qw & \qw & \qw & \qw \\
 & & & & \vdots & & & & & & \\
 & \qw & \qw & \qw & \qw & \ctrl{4} & \qw & \qw & \qw & \qw & \qw \\
\lstick[3]{$\ket{\varphi_y}_y^t$} & \qw & \qw & \qw & \qw & \qw & \ctrl{3} & \qw & \qw & \qw & \qw \\
 & & & & & & & \vdots & & & \\
 & \qw & \qw & \qw & \qw & \qw & \qw & \qw & \ctrl{1} & \qw & \qw \\
\lstick{$\ket{0}_b$} & \gate{H} & \gate{P(-\tfrac{\pi}{2})} & \gate{P(\beta_0)} & \ \cdots\ \qw & \gate{P(\beta_{t-1})} & \gate{P(-\beta_0)} & \ \cdots\ \qw & \gate{P(-\beta_{t-1})} & \gate{H} & \meter{} & \cw
\end{quantikz}}
\caption{Register-driven cascade. The circuit of \Cref{fig:W} with the two data
gates replaced by cascades of controlled-phase gates (\Cref{lemma:cascade}).
Two $t$-qubit registers basis-encode the phases $\varphi_x=c\,(x-a)$ and $\varphi_y=c\,(y-a)$ and drive the cascades with the fixed angles
$\pm\beta_i=\pm A/2^{i+1}$ over the angle range $A$ of \Cref{sec:encodings};
the comparator qubit $\ket{0}_b$ carries the two Hadamard gates and the fixed
$P(-\tfrac{\pi}{2})$ gate, and its measurement decides $x<y$.%
}
\Description{The single-qubit comparator circuit with its two data phase gates
replaced by cascades of controlled phase gates, driven by two t-qubit
registers; the bottom qubit carries the two Hadamard gates, the fixed phase
gate of minus pi over two, and the measurement.}
\label{fig:comparator-cascade}
\end{figure}

\subsubsection*{Hard-coded values}\label{sec:comparator}
A classical machine holds the two values $x$ and $y$, computes $\varphi_x=c\,(x-a)$ and $\varphi_y=c\,(y-a)$, and writes them into the parameters of two phase gates;
no quantum operation precedes the comparator. The circuit (\Cref{fig:W})
applies a Hadamard gate, the fixed offset $P(-\tfrac{\pi}{2})$, the two data
gates $P(\varphi_x)$ and $P(-\varphi_y)$, and a second Hadamard gate, then
measures. The three diagonal gates commute and their angles add, so the
comparator qubit accumulates $\phi=\varphi_x-\varphi_y-\tfrac{\pi}{2}$ and
\Cref{thm:quac} applies. The comparison costs one qubit and five gates at
constant depth, whatever the precision of the values.

\subsubsection*{Register-driven cascade}\label{sec:register}
The hard-coded circuit needs both values on a classical machine. The
register-driven cascade drops that requirement: two quantum registers supply
the operands, and since a register may hold a superposition, one circuit
compares many pairs at once. The thresholding application of
\Cref{sec:app-threshold} relies on this property.

The values reside in two $t$-qubit registers that basis-encode their phases in
the convention of \Cref{sec:encodings}. We apply \Cref{lemma:cascade} with
$R=P$: the data gate $P(\varphi_x)$ becomes a cascade of controlled-phase gates
with fixed angles $+\beta_i$ controlled by $\ket{\varphi_x}_x^t$, and
$P(-\varphi_y)$ becomes a cascade with angles $-\beta_i$ controlled by
$\ket{\varphi_y}_y^t$. \Cref{fig:comparator-cascade} shows the circuit. The
weights $\beta_i=A/2^{i+1}$ do not depend on the data, so the layout is the
same for every input and only the register contents change. Beyond the two
registers, the comparator uses one qubit, two Hadamard gates, the fixed offset
and $2t$ controlled-phase gates: $2t{+}1$ qubits, $2t{+}3$ gates and
$\mathcal{O}(t)$ depth.

Each cascade is diagonal in the computational basis of its register, so it acts
on a superposition one basis state at a time: for every pair
$\ket{\varphi_x}_x\ket{\varphi_y}_y$ the comparator qubit picks up the phase of
that pair and nothing else. What happens next is the algorithm's choice. If it
measures the registers, or an index register entangled with them, together
with the comparator qubit, it reads the decision of a single pair. If it does
not measure, the comparator qubit stays entangled with the registers and
carries the decision of every pair at once, and a downstream routine can
consume this state directly, for instance through amplitude
amplification~\cite{brassard2000quantum} applied to the branches whose
comparator qubit reads $\ket{1}$.

\paragraph*{Finite precision.}
When both phases lie on the grid of \Cref{sec:encodings}, the cascade and the
hard-coded circuit apply the same unitary to the comparator qubit and reach the
same decision. For other values the registers hold rounded phases, and
\Cref{prop:window} bounds the effect of that rounding on the decision.

\begin{proposition}[Rounding window]\label{prop:window}
Let the registers span the angle range $A$, and let the exact phases
$\varphi_x$ and $\varphi_y$ satisfy the band hypothesis
$|\varphi_x-\varphi_y|<\pi-A\,2^{-t}$. Round each phase to the nearest point of
the grid $\{k\,A/2^{t}: k=0,\dots,2^{t}-1\}$ before encoding it. Whenever
\[
|\varphi_x-\varphi_y| \;>\; \frac{A}{2^{t}},
\]
the register-driven cascade returns the same decision as the hard-coded
circuit; disagreements and spurious ties occur only inside the window
$|\varphi_x-\varphi_y|\le A\,2^{-t}$. For the encoding $\varphi(u)=c\,(u-a)$ of
\Cref{sec:encodings} at the range $A=W/(1-2^{-t})$, the band hypothesis holds
for every $t\ge1$ as soon as $W<\tfrac{\pi}{2}$, and the decisions agree
whenever $|x-y|>(b-a)/(2^{t}-1)$.
\end{proposition}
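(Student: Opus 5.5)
The plan is to reduce everything to \Cref{thm:quac} applied to the rounded phases. Write $\tilde\varphi_x,\tilde\varphi_y$ for the grid points stored in the registers. By \Cref{lemma:cascade} with $R=P$, the two cascades apply $P(\tilde\varphi_x)$ and $P(-\tilde\varphi_y)$ to the comparator qubit on each basis pair. The diagonal gates commute, so the cascade circuit is exactly the hard-coded circuit run on $(\tilde\varphi_x,\tilde\varphi_y)$. The cascade therefore decides by the sign of $\sin(\tilde d)$ with $\tilde d=\tilde\varphi_x-\tilde\varphi_y$, while the hard-coded circuit decides by the sign of $\sin(d)$ with $d=\varphi_x-\varphi_y$. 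It suffices to show that $\tilde d$ and $d$ have the same sign and both lie in $(-\pi,\pi)$.

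The key step is the rounding bound. As long as neither phase exceeds the largest grid point, each rounding moves its phase by at most $A/2^{t+1}$, so $|\tilde d-d|\le A/2^{t}$. If $|d|>A2^{-t}$, then $\tilde d$ lies strictly on the same side of $0$ as $d$, and is in particular nonzero, so no spurious tie occurs. The band hypothesis gives $|\tilde d|\le|d|+A2^{-t}<\pi$, and $|d|<\pi$ holds trivially. \Cref{thm:quac} then yields the same side of $\tfrac12$ for both circuits. Any disagreement or spurious tie therefore forces $|d|\le A2^{-t}$. The \textbf{main obstacle} is saturation. For phases above $A(1-2^{-t})$ the displacement can exceed half a step. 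I would handle this by observing that the encoding of \Cref{sec:encodings} keeps all phases in $[0,W]$, and $W=A(1-2^{-t})$ is exactly the top grid point, so saturation never happens. For general exact phases I would state this as the implicit assumption that the phases lie in $[0,A(1-2^{-t})]$.

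For the specialization, set $A=W/(1-2^{-t})=W2^t/(2^t-1)$, so that $A2^{-t}=W/(2^t-1)$. The band hypothesis requires $|d|<\pi-W/(2^t-1)$. Since $|d|\le W$, it suffices that $W+W/(2^t-1)=A<\pi$. For $t\ge1$ we have $A\le 2W$, and $W<\pi/2$ gives $A<\pi$. Finally $d=c(x-y)$ and $W=c(b-a)$ with $c>0$. So $|d|>A2^{-t}=c(b-a)/(2^t-1)$ is equivalent to $|x-y|>(b-a)/(2^t-1)$, which is the stated value-space window.
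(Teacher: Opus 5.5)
Your proposal is correct and follows the paper's proof essentially step for step: the half-step rounding bound gives $|\tilde d-d|\le A2^{-t}$, the separation hypothesis preserves the sign, the band hypothesis gives $|\tilde d|<\pi$, and the specialization goes through $A=W2^{t}/(2^{t}-1)<\pi$, the absence of saturation because $W$ is the top grid point, and the identity $A2^{-t}=c(b-a)/(2^{t}-1)$. Your explicit caveat that the general half-step bound needs the exact phases to lie in $[0,A(1-2^{-t})]$ is a useful precision: the paper's proof leaves it implicit in the general part and checks it only for the linear encoding.
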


\begin{proof}[Proof (rounding perturbation bound)]
Round-to-nearest on a grid of step $A/2^{t}$ moves each phase by at most
$A/2^{t+1}$, so the encoded difference $\delta_r$ satisfies
$|\delta_r-(\varphi_x-\varphi_y)|\le A/2^{t}$. The separation hypothesis
$|\varphi_x-\varphi_y|>A/2^{t}$ then makes $\delta_r$ nonzero with the sign of
$\varphi_x-\varphi_y$, and the band hypothesis gives
$|\delta_r|<\pi-A\,2^{-t}+A\,2^{-t}=\pi$, an interval on which $\sin\delta_r$
has the sign of $\delta_r$. The decision $P(\ket{0})<\tfrac12\iff\sin\delta_r<0$
therefore coincides with $\varphi_x<\varphi_y$.

For $\varphi(u)=c\,(u-a)$ the phases lie in $[0,W]$ and their differences in
$[-W,W]$, so the band hypothesis reads $W<\pi-A\,2^{-t}$, that is
$W\,2^{t}/(2^{t}-1)<\pi$, which holds for every $t\ge1$ whenever
$W<\tfrac{\pi}{2}$ and is tight at $t=1$. Since $\varphi(a)=0$ and
$A(1-2^{-t})\ge W$, every data phase lies between $0$ and the largest grid
point, so no phase saturates and round-to-nearest moves each of them by at most
$A\,2^{-(t+1)}$. Finally, the encoding scales every gap by $c$, and
$A\,2^{-t}=c\,(b-a)/(2^{t}-1)$, so $|x-y|>(b-a)/(2^{t}-1)$ and
$|\varphi_x-\varphi_y|>A\,2^{-t}$ are the same statement.
\end{proof}

Precision therefore enters the decision only as a resolution limit. When
$|x-y|\le(b-a)/(2^{t}-1)$ the cascade may disagree with the hard-coded circuit;
when $|x-y|>(b-a)/(2^{t}-1)$ the two agree. Each additional bit shrinks the
window by a factor $(2^{t+1}-1)/(2^{t}-1)$, slightly more than two. The worst
case attains the bound, and on a uniform grid the reason is simple: the
encoding is strictly increasing and rounding is monotone, so two grid points
never swap order, and rounding changes a decision only when both phases fall
into the same rounding cell, which happens for gaps up to one full step and
for no larger gap.

\paragraph*{Cost against an exact comparator.}
The width-one count of \Cref{sec:comparator} applies to hard-coded values only.
On two $t$-qubit registers the cascade uses one extra qubit and $2t$
controlled-phase gates, that is $2t{+}1$ qubits at depth $\mathcal{O}(t)$, and
returns a statistical bit. An exact reversible comparator on the same registers
uses the same $2t{+}1$ qubits, has depth $\Theta(\log_2 t)$ and returns an
exact bit~\cite{vandaele2026}. Its $\Theta(t)$ gate count is of the same order
as ours, but over a different gate set: the cascade uses $2t$ controlled-phase
gates with arbitrary angles, which that design avoids because such gates are
expensive to synthesize fault-tolerantly. The two gate counts are therefore not
directly comparable. The cascade differs from an exact comparator in two respects. It leaves the
registers unmeasured, so the operands stay in superposition after the
comparison. And its phases remain adjustable: powering (\Cref{sec:accuracy})
rescales the fixed angles of the cascade without adding gates, and only a
statistical readout benefits from the larger margin.

\section{Margin Amplification by Phase Powering}\label{sec:accuracy}

The readout of \Cref{sec:cost} spends shots quadratically in the inverse
separation, and the whole of that cost comes from a margin that vanishes at
the tie. Since the encoded phases are inputs under our control, we may encode
an integer multiple of each angle instead: scaling both by the same integer
$r$ scales their difference by $r$, and so widens the margin where the readout needs it most.

\begin{proposition}[Phase powering]\label{prop:powering}
Let $r\ge 1$ be an integer, and replace the data gates of \Cref{fig:W} by
$P(r\varphi_x)$ and $P(-r\varphi_y)$, or the fixed angles $\pm\beta_i$ of
\Cref{fig:comparator-cascade} by $\pm r\beta_i$. The circuit produces
\[
P(\ket{0})=\tfrac12+\tfrac12\sin\bigl(r(\varphi_x-\varphi_y)\bigr),
\]
and the decision rule $P(\ket{0})<\tfrac12\iff \varphi_x<\varphi_y$ remains
valid whenever $r\,|\varphi_x-\varphi_y|<\pi$. Within the band
$r\,|\varphi_x-\varphi_y|\le\tfrac{\pi}{2}$ the margin satisfies
$\Delta_r=\tfrac12\bigl|\sin\bigl(r(\varphi_x-\varphi_y)\bigr)\bigr|
 \ge \tfrac{r}{\pi}\,|\varphi_x-\varphi_y|$,
an $r$-fold amplification near ties.
\end{proposition}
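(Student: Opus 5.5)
The plan is to reduce everything to \Cref{thm:quac} by identifying the accumulated phase, and then to prove the margin bound with a concavity inequality for the sine.

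\emph{Step 1: the accumulated phase.} In the hard-coded circuit of \Cref{fig:W}, the three diagonal gates $P(-\tfrac{\pi}{2})$, $P(r\varphi_x)$ and $P(-r\varphi_y)$ commute. By the additive composition $P(\theta_1)P(\theta_2)=P(\theta_1+\theta_2)$ they combine into $P(\phi_r)$ with $\phi_r=r\varphi_x-r\varphi_y-\tfrac{\pi}{2}$.

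In the cascade of \Cref{fig:comparator-cascade}, I apply \Cref{lemma:cascade} with the range $A$ replaced by $rA$. The weights $r\beta_i=rA/2^{i+1}$ are exactly the lemma's weights for range $rA$. A register holding bits $a_i$ therefore produces $P(r\sum_i a_i\beta_i)=P(r\varphi_x)$, and likewise $P(-r\varphi_y)$. On each basis pair this gives the same $P(\phi_r)$, and linearity extends the conclusion to superpositions. The lemma needs only $R(\phi_1+\phi_2)=R(\phi_1)R(\phi_2)$ and places no upper limit on $A$, so the rescaling is legitimate.

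\emph{Step 2: probability and decision rule.} \Cref{thm:quac} applies verbatim to the phase pair $(r\varphi_x,r\varphi_y)$. It gives $P(\ket{0})=\tfrac12+\tfrac12\sin(r(\varphi_x-\varphi_y))$. It also gives that, when $|r\varphi_x-r\varphi_y|<\pi$, the rule $P(\ket{0})<\tfrac12$ holds exactly when $r\varphi_x<r\varphi_y$. Since $r\ge1>0$, this is equivalent to $\varphi_x<\varphi_y$.

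\emph{Step 3: the margin bound.} Set $u=r|\varphi_x-\varphi_y|\in[0,\tfrac{\pi}{2}]$. The sine is odd, so $\Delta_r=\tfrac12\sin u$. I then use Jordan's inequality, $\sin u\ge \tfrac{2}{\pi}u$ on $[0,\tfrac{\pi}{2}]$. It follows from the concavity of $\sin$ on that interval: the graph lies above the chord joining $(0,0)$ and $(\tfrac{\pi}{2},1)$. This yields
\[
\Delta_r\ge \tfrac12\cdot\tfrac{2}{\pi}\,r|\varphi_x-\varphi_y|=\tfrac{r}{\pi}|\varphi_x-\varphi_y|.
\]

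\emph{Interpretation.} Compare this with the unpowered margin near ties, $\Delta\approx\tfrac12|\varphi_x-\varphi_y|$, which is at most that value. The lower bound for $\Delta_r$ grows linearly in $r$, which is the claimed $r$-fold amplification up to the constant $2/\pi$.

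There is no serious obstacle. The only point needing care is Step 1 for the cascade: one must check that scaling the weights by $r$ is the same as invoking \Cref{lemma:cascade} at range $rA$. In particular, no wrap-around or saturation issue arises, because the phase identity is exact in $\mathbb{R}$ and periodicity is absorbed by the hypothesis $r|\varphi_x-\varphi_y|<\pi$.
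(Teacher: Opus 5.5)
Your proposal is correct and follows the paper's proof essentially step for step. The paper likewise merges the commuting diagonal gates into the single phase $-\tfrac{\pi}{2}+r(\varphi_x-\varphi_y)$; for the cascade it factors $r$ out of $\sum_i a_i\,(r\beta_i)$, which is the same observation as your reading of \Cref{lemma:cascade} at range $rA$. It then applies \Cref{thm:quac} to the scaled separation, using $r>0$ to preserve the sign, and obtains the margin bound from the concavity chord inequality $\sin\theta\ge\tfrac{2}{\pi}\theta$ on $[0,\tfrac{\pi}{2}]$.
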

 
\begin{proof}[Proof (margin amplification)]
Write $d=\varphi_x-\varphi_y$ for the separation. The proof has three steps:
the closed form, the validity band, and the margin bound.
 
\emph{Closed form.} In both circuits the diagonal gates commute and their
angles add, so the comparator qubit accumulates the single phase
$\phi=-\tfrac{\pi}{2}+r(\varphi_x-\varphi_y)=-\tfrac{\pi}{2}+r\,d$: with
hard-coded values because the data gates carry $r\varphi_x$ and $-r\varphi_y$,
and with the cascade because $\sum_i a_i\,(r\beta_i)=r\sum_i a_i\beta_i$ by
\Cref{lemma:cascade}. This is the setting of \Cref{thm:quac} with
$\varphi_x-\varphi_y$ replaced by $r\,d$, and the interference calculation
there gives $P(\ket{0})=\tfrac12+\tfrac12\sin(r\,d)$.
 
\emph{Validity band.} By the closed form, $P(\ket{0})<\tfrac12$ holds exactly
when $\sin(r\,d)<0$. On the interval $(-\pi,\pi)$ the sine has the sign of its
argument, so as long as $|r\,d|<\pi$ we have $\sin(r\,d)<0$ exactly when
$r\,d<0$, that is, exactly when $\varphi_x<\varphi_y$, because $r>0$. This is
the band $r\,|\varphi_x-\varphi_y|<\pi$ of the statement. Beyond it the sine
changes sign and the rule inverts, so the band cannot be widened.
 
\emph{Margin bound.} Inside the narrower band $|r\,d|\le\tfrac{\pi}{2}$ the
margin is $\Delta_r=\tfrac12|\sin(r\,d)|=\tfrac12\sin|r\,d|$, since the sine is
odd. On $[0,\tfrac{\pi}{2}]$ the sine is concave, vanishes at $0$ and equals
$1$ at $\tfrac{\pi}{2}$, so it lies above the chord joining those two points:
$\sin\theta\ge\tfrac{2}{\pi}\,\theta$ for every $\theta\in[0,\tfrac{\pi}{2}]$.
Applied at $\theta=|r\,d|$ this gives
\[
\Delta_r
\;\ge\;\tfrac12\cdot\tfrac{2}{\pi}\,|r\,d|
\;=\;\tfrac{r}{\pi}\,|\varphi_x-\varphi_y| .
\]
The unpowered margin behaves as $\Delta_1=\tfrac12|\sin d|\approx\tfrac12|d|$
near a tie, whereas the powered one is at least $\tfrac{r}{\pi}|d|$ and behaves
as $\tfrac12 r|d|$ there, so the amplification near ties is by the factor $r$.
\end{proof}

Powering needs no multiplier. With hard-coded values we write
$r\varphi_x \bmod 2\pi$ into the phase gate. With the register-driven cascade
the registers are never touched: they hold the phases, as the bits
$a_0\cdots a_{t-1}$ of \Cref{lemma:cascade}, and the phase acts on the
comparator qubit only when the cascade's fixed angles act on those bits. The
cascade contributes $\theta=\sum_i a_i\beta_i$, so
$r\theta=\sum_i a_i\,(r\beta_i)$, and the factor $r$ moves into the fixed
angles. The layout of \Cref{fig:comparator-cascade} is unchanged; only its
angles become $r\beta_i \bmod 2\pi$. Multiplying the register content by $r$
and then running the original cascade would produce the same phase, at the
cost of a multiplier and extra qubits.
 
Powering does not enlarge the rounding window. With angles $r\beta_i$ the
registers still hold phases rounded on the grid of step $A/2^{t}$, and the
cascade multiplies the rounding error and the separation by the same $r$, so
the proof of \Cref{prop:window} goes through with $d$ replaced by $r\,d$: the
two circuits agree whenever $|x-y|>(b-a)/(2^{t}-1)$, exactly as for $r=1$,
provided the band hypothesis holds in its powered form
$r\,|\varphi_x-\varphi_y|<\pi-r\,A\,2^{-t}$. Under the standing convention
$W<\tfrac{\pi}{2}$ this reads $r\,W\,2^{t}/(2^{t}-1)<\pi$, which is the
admissibility condition on $r$ that \Cref{sec:adaptive} uses.

\subsection{The Adaptive Schedule}\label{sec:adaptive}
The margin of the comparator is $\tfrac12|\sin u|$, where
$u=|\varphi_x-\varphi_y|$ is the separation of the pair. For a narrow pair,
say $u=0.05$, the margin is about $0.025$, and telling $P(\ket{0})=0.525$ from
$0.5$ takes thousands of shots. Powering multiplies $u$ by $r$ and widens the
margin, but \Cref{prop:powering} holds only while $r\,u<\pi$, and past that
point the circuit returns the opposite order with
a wide margin, since the sine changes sign at $\pi$. For $u=0.8$, for example,
$r=3$ gives a powered separation of $2.4$ and
$P(\ket{0})=\tfrac12(1+\sin 2.4)\approx0.84$, a clear vote for the correct
order, whereas $r=5$ gives $4.0$ and $P(\ket{0})\approx0.12$, an equally clear
vote for the wrong one, and the readout offers no way to tell the two apart.
The right $r$ for a pair is the one that keeps $r\,u$ well inside $(0,\pi)$,
so it depends on $u$; but $u$ is the quantity the comparator does not know,
and a procedure that knew it would already know the order. A fixed $r$ settles
for the bound that the data range provides: under the encoding of
\Cref{sec:encodings} every separation is at most $W$, so any $r<\pi/W$ is safe
on the whole range, but that $r$ is calibrated on the widest pair and stays
small for the narrow pairs that need it most.

The adaptive schedule finds a suitable $r$ for the pair at hand without
knowing $u$, by approaching the limit from below.
\begin{enumerate}
\item \emph{Level $0$.} Run the unpowered circuit ($r=1$) for $n_0$ shots and
compute the frequency $\widehat{p}_0$ of the outcome $\ket{0}$. The
\emph{empirical margin} is $|\widehat{p}_0-\tfrac12|$.
\item \emph{Declare or double.} If the empirical margin reaches a threshold
$\tau$, declare the order, $\varphi_x<\varphi_y$ when
$\widehat{p}_0<\tfrac12$ and $\varphi_x>\varphi_y$ otherwise, and stop. If it
does not, move to the next level with $r$ doubled: $r=2$, then $4$, then $8$,
and so on. Each level costs the same $n_0$ shots and uses the same circuit,
since powering only rewrites the fixed angles.
\item \emph{Give up at the budget.} If no level has declared after $L$
levels, report a tie at resolution $\pi2^{-L}$, because any separation of at
least that size would have crossed the threshold within the budget.
\end{enumerate}

The schedule works because of a window. When the powered separation $r\,u$
lies in $[\tfrac{\pi}{4},\tfrac{\pi}{2}]$, the margin is at least
$\tfrac12\sin\tfrac{\pi}{4}\approx0.35$, which is large, and $r\,u$ is still
far from $\pi$, so the sign is correct. The two endpoints of the window differ
by a factor of two, and each level multiplies $r\,u$ by exactly two, so a
separation that starts below $\tfrac{\pi}{4}$ cannot jump over the window and
some level lands inside it. That level is
$\ell^{\ast}=\lceil\log_2(\pi/4u)\rceil$, and the schedule stops there at the
latest. It also does not fail on the way: every level before $\ell^{\ast}$ has
$r\,u<\tfrac{\pi}{4}$, so none of them has wrapped, and any of them that
declares declares correctly. The wrap region $r\,u\ge\pi$ is therefore never
reached, which is exactly the hazard that a fixed $r$ has to guard against in
advance.

\Cref{tab:schedule-example} traces the schedule on $u=0.05$ at confidence
$0.95$. With $L=8$, $\delta_0=0.05$, the smallest admissible $n_0=93$ and the midpoint threshold $\tau=\tfrac14\sin\tfrac{\pi}{4}\approx0.177$, it declares at level $3$, after four levels of $93$ shots each, so $372$ shots in total, whereas the unpowered readout of \Cref{eq:shots} needs about $2\,950$ shots for the same
decision, a factor of about eight. Level $\ell^{\ast}=4$ is where the schedule
stops \emph{at the latest}, not where it stops here: declaring earlier is safe,
because no level before $\ell^{\ast}$ has wrapped. The number of
levels is $\log_2(1/u)$ up to a constant, and that is where the logarithm in
the shot count of \Cref{cor:adaptive} comes from.

\begin{table}[t]
\centering
\caption{The adaptive schedule on a pair with separation $u=0.05$ radians.
Each level multiplies the powered separation by two; the first level whose
true margin exceeds the threshold lies inside the window
$[\tfrac{\pi}{4},\tfrac{\pi}{2}]$.}
\label{tab:schedule-example}
\small
\begin{tabular}{ccccl}
\toprule
\textbf{Level} $\ell$ & $r_\ell$ & $r_\ell u$ & \textbf{Margin} $\tfrac12\sin(r_\ell u)$ & \textbf{Outcome} \\
\midrule
0 & 1  & 0.05 & 0.025 & below threshold \\
1 & 2  & 0.10 & 0.050 & below threshold \\
2 & 4  & 0.20 & 0.099 & below threshold \\
3 & 8  & 0.40 & 0.195 & below threshold \\
4 & 16 & 0.80 & 0.359 & inside the window \\
\bottomrule
\end{tabular}
\end{table}

Two requirements fix the constants. Hoeffding's inequality with a union bound
over the $L$ levels gives a per-level radius
$\varepsilon_0=\sqrt{\ln(2L/\delta_0)/2n_0}$, so that with probability at
least $1-\delta_0$ every level estimates its margin to within $\varepsilon_0$.
The threshold must then lie between two limits: above $\varepsilon_0$, so that
a declaration certifies a nonzero true margin, and at most
$\tfrac12\sin\tfrac{\pi}{4}-\varepsilon_0$, so that the level inside the
window does declare. Both hold at once exactly when $n_0>16\ln(2L/\delta_0)$,
which is about $92$ for $L=8$ and $\delta_0=0.05$. The schedule needs nothing
else in advance: only $u\le\tfrac{3\pi}{4}$, which the standing convention
$W<\tfrac{\pi}{2}$ guarantees for every pair, and the budget $L$, which sets
the finest resolution. The doubling itself comes from Kitaev's phase
estimation~\cite{kitaev1995}, with the difference that it stops at a decision
instead of completing an estimate.

\begin{corollary}[Adaptive powered readout]\label{cor:adaptive}
Fix a confidence $1-\delta_0$ and a level budget $L$. For
$\ell=0,1,\dots,L-1$, run the comparator with powering $r_\ell=2^{\ell}$ for
$n_0=\mathcal{O}(\log_2(L/\delta_0))$ shots, and stop at the first level whose
empirical margin $|\widehat{p}_0-\tfrac12|$ reaches a fixed threshold $\tau$,
declaring the sign of $\widehat{p}_0-\tfrac12$. Let the separation
$u=|\varphi_x-\varphi_y|$ satisfy $u\le\tfrac{3\pi}{4}$ and $2^{L}u\ge\pi$.
Then the declared order is correct with probability at least $1-\delta_0$, and
the schedule uses
$\mathcal{O}\!\big(\big(1+\log_2\tfrac{1}{u}\big)\log_2\tfrac{L}{\delta_0}\big)$
shots in total. If no level declares, the pair is a tie at resolution
$\pi2^{-L}$.
\end{corollary}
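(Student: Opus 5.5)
The plan is to run the whole argument on a single good event and then read off correctness, the stopping level, and the shot count from the window argument of \Cref{sec:adaptive}. Let $d=\varphi_x-\varphi_y$ and, at level $\ell$, let $p_\ell=\tfrac12+\tfrac12\sin(2^\ell d)$. By \Cref{prop:powering} this is the probability of $\ket{0}$ at powering $r_\ell=2^\ell$. Let $\widehat{p}_\ell$ be its empirical frequency over $n_0$ shots. Hoeffding's inequality with a union bound over the $L$ levels shows that the event
\[
E=\bigl\{\,|\widehat{p}_\ell-p_\ell|\le\varepsilon_0 \text{ for all } \ell<L\,\bigr\},\qquad \varepsilon_0=\sqrt{\ln(2L/\delta_0)/2n_0},
\]
has probability at least $1-\delta_0$. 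This holds even though the schedule may stop early, since we can imagine all $L$ levels being sampled. Fix $\tau$ strictly between $\varepsilon_0$ and $\tfrac12\sin\tfrac{\pi}{4}-\varepsilon_0$, for instance $\tau=\tfrac14\sin\tfrac{\pi}{4}$. This is possible once $n_0>16\ln(2L/\delta_0)$, which gives $n_0=\mathcal{O}(\log_2(L/\delta_0))$. From here on we work on $E$.

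\emph{Window level.} Set $\ell^\ast=\max\{0,\lceil\log_2(\pi/4u)\rceil\}$. If $u<\tfrac{\pi}{4}$, then $2^{\ell^\ast}u\in[\tfrac{\pi}{4},\tfrac{\pi}{2})$. Every earlier level has $2^\ell u<\tfrac{\pi}{4}$. The hypothesis $2^L u\ge\pi$ gives $\ell^\ast\le L-2<L$, so the window level lies inside the budget. If instead $u\in[\tfrac{\pi}{4},\tfrac{3\pi}{4}]$, then $\ell^\ast=0$ and $|\sin u|\ge\sin\tfrac{\pi}{4}$ still holds. This is the only place where the hypothesis $u\le\tfrac{3\pi}{4}$ enters. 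In both cases the true margin at $\ell^\ast$ is at least $\tfrac12\sin\tfrac{\pi}{4}$. On $E$ the empirical margin is therefore at least $\tfrac12\sin\tfrac{\pi}{4}-\varepsilon_0\ge\tau$, so level $\ell^\ast$ declares if no earlier level has.

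\emph{No wrong declaration.} Every level $\ell\le\ell^\ast$ satisfies $2^\ell u<\pi$: before $\ell^\ast$ we have $2^\ell u<\tfrac{\pi}{4}$, and at $\ell^\ast$ we have $2^{\ell^\ast}u<\tfrac{\pi}{2}$ or $u\le\tfrac{3\pi}{4}$. By \Cref{prop:powering}, $\operatorname{sign}(p_\ell-\tfrac12)=\operatorname{sign}(d)$ at each such level. A declaring level has $|\widehat{p}_\ell-\tfrac12|\ge\tau>\varepsilon_0$. On $E$, $\widehat{p}_\ell-\tfrac12$ therefore has the same sign as $p_\ell-\tfrac12$ and is nonzero, so the declared order is correct. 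The schedule stops at some level $\le\ell^\ast$, so levels with $2^\ell u\ge\pi$ are never run. Hence on $E$ the output is correct and the schedule stops by level $\ell^\ast$. The total shot count is at most $(\ell^\ast+1)n_0=\mathcal{O}\bigl((1+\log_2\tfrac1u)\log_2\tfrac{L}{\delta_0}\bigr)$.

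\emph{Tie statement.} This is the contrapositive of the window argument. On $E$, any separation with $\pi 2^{-L}\le u\le\tfrac{3\pi}{4}$ declares within the budget. So a run in which no level declares, if $E$ holds, certifies $u<\pi2^{-L}$.

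\emph{Main obstacle.} The step that needs care is the bookkeeping of the two thresholds on $\tau$ together with the union bound. The bound must cover the random stopping time, and it does because $E$ is defined over all $L$ levels in advance. The endpoint cases also need checking: $u\ge\tfrac{\pi}{4}$, where $\ell^\ast=0$, and the boundary $2^L u=\pi$, which must still leave $\ell^\ast<L$. Everything else follows directly from \Cref{prop:powering}.
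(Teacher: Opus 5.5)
Your proposal is correct and follows essentially the same route as the paper's proof sketch. Both arguments use Hoeffding with a union bound over the $L$ levels and a threshold sandwiched as $\varepsilon_0<\tau\le\tfrac12\sin\tfrac{\pi}{4}-\varepsilon_0$. When $u\ge\tfrac{\pi}{4}$, level $0$ declares, using $u\le\tfrac{3\pi}{4}$. Otherwise the window level $\ell^\ast=\lceil\log_2(\pi/4u)\rceil\le L-2$ declares after no earlier level has wrapped, for $n_0(\ell^\ast+1)$ shots. Two points you make explicit are left implicit in the paper's sketch: you define the good event over all $L$ levels in advance to cover the random stopping time, and you derive the tie statement as the contrapositive of the window argument.
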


\begin{proof}[Proof sketch]
Let $\varepsilon_0$ be the Hoeffding radius of one level and choose
$\varepsilon_0<\tau\le\tfrac12\sin\tfrac{\pi}{4}-\varepsilon_0$. A union bound
over the $L$ levels puts every empirical margin within $\varepsilon_0$ of the
true one with probability at least $1-\delta_0$; on that event a declaration
certifies a true margin of at least $\tau-\varepsilon_0>0$, whose sign is the
true order as long as the declaring level has not wrapped
(\Cref{prop:powering}). If $u\ge\tfrac{\pi}{4}$, level $0$ has true margin
$\tfrac12\sin u\ge\tfrac12\sin\tfrac{\pi}{4}$ and declares; here
$u\le\tfrac{3\pi}{4}$ is needed, since for $u\in(\tfrac{3\pi}{4},\pi)$ the
level-$0$ margin falls below the threshold while every later level wraps. If
$u<\tfrac{\pi}{4}$, the first level with
$2^{\ell}u\in[\tfrac{\pi}{4},\tfrac{\pi}{2}]$ is
$\ell^{\ast}=\lceil\log_2(\pi/4u)\rceil$, every earlier level has not wrapped,
and $\ell^{\ast}$ declares; the hypothesis $2^{L}u\ge\pi$ gives
$\ell^{\ast}\le L-2$, so that level is run. The shot count is
$n_0(\ell^{\ast}+1)=\mathcal{O}\big((1+\log_2(1/u))\,n_0\big)$.
\end{proof}

On a superposition of operand pairs the factor $r$ is a single setting of the
circuit's fixed angles, so every branch receives the same $r$. A fixed $r$
therefore works branch by branch exactly as above: every separation is at most
$W$, so $r<\pi/W$ keeps every branch inside the band, and this is the only
regime available when the state is not measured but consumed coherently
(\Cref{sec:register}). The adaptive schedule instead requires a readout that
measures the registers, or an index register entangled with them, alongside
the comparator qubit. Each shot then reports which pair it sampled, the
empirical margin of each branch is estimated from its own shots, and the
declaration rule of \Cref{cor:adaptive} applies to each branch separately: a
wide pair declares at level $0$ and ignores the later levels at which it would
wrap, a narrow pair continues to its own $\ell^{\ast}$, and the proof goes
through on every branch since $u\le\tfrac{3\pi}{4}$ holds for all of them.
The cost grows in two ways. Each branch needs its $n_0$ shots per level, so
the total scales with the inverse of the smallest branch weight, and holding
every branch correct at once extends the union bound over the $K$ branches,
which raises $n_0$ by an additive $\log_2 K$.

\section{Experiments and Application}\label{sec:exp-setup}

We implement all circuits in Qiskit~\numEnvQiskit{} (Aer~\numEnvAer,
Python~\numEnvPython) and validate them by noiseless simulation: statevector
simulation for the exactness checks, and seeded shot sampling for the readout
comparison and the thresholding run. On a $\numVerGrid\times\numVerGrid$ grid
of inputs $x,y\in[0,1]$ at exact angles, the marginal of the comparator qubit
agrees with the closed form~\eqref{eq:p0} to machine precision, deviating by at
most $\numVerFormulaErr$, and the decision rule reproduces the classical
predicate $x<y$ at every grid point off the diagonal; it errs only at exact
equality, where the margin is zero by construction. When the phases lie on the
$t$-bit grid, the register-driven cascade of \Cref{fig:comparator-cascade}
reproduces the exact-angle circuit to $\numVerCascadeErr$. 

\subsection{Order Decision versus Value Estimation}\label{sec:exp-order}
\begin{figure}[t]
\centering
\includegraphics[width=0.7\linewidth]{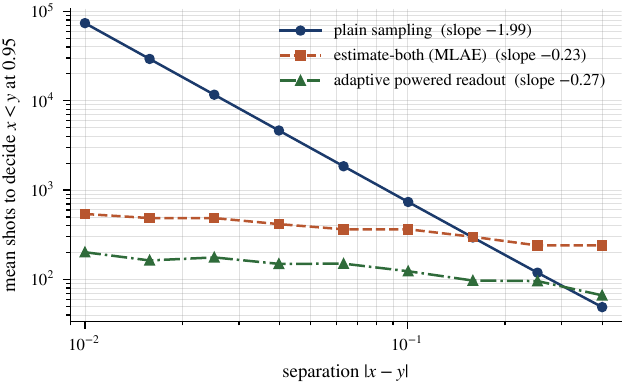}
\caption{Shots to decide $x<y$ at confidence $0.95$ versus the separation
$|x-y|$, on log--log axes. All three series count the same unit: a
\emph{shot} is one run of a circuit ending in one measurement. Plain sampling
of the comparator grows like $|x-y|^{-2}$, the exponent of the shot
bound; estimating both angles by amplitude estimation is
flatter in shots but pays for it in circuit depth, which the text reports
separately; the adaptive powered readout on register phases is near-flat, its logarithmic count.}

\Description{Log-log plot of the shots needed to decide the order versus the
separation between the two values, for three series in one unit: plain sampling
of the comparator, falling steeply with slope about minus two; estimating both
values by amplitude estimation, on a flatter curve above; and the powered
register readout, near-flat and lowest except at the widest separation, where
plain sampling crosses below it.}
\label{fig:order-vs-estimate}
\end{figure}
The decision rule uses only the sign of $\varphi_x-\varphi_y$, and the
baseline should be judged against that. The estimate-both baseline recovers
both values and compares them, and recovering one value to additive error
$\varepsilon$ by amplitude estimation costs $\Theta(1/\varepsilon)$
applications of the state-preparation unitary~\cite{brassard2002}, the
Heisenberg limit. The baseline lives in its own input model: it is the cost of
reaching the same decision when the values originate as amplitudes of a
prepared state, a regime that our construction does not cover, and it is the
only row of \Cref{tab:resources} whose operands are amplitudes. Near a tie the
error has to shrink with the separation, so this route spends
$\Theta(1/|\varphi_x-\varphi_y|)$ applications per value and returns two
numbers where one bit was asked for. Deciding a sign should therefore cost
less than estimating two values, and \Cref{sec:accuracy} shows that it does:
the adaptive powered readout decides in
$\mathcal{O}(1+\log_2(1/|\varphi_x-\varphi_y|))$ shots at fixed confidence
and level budget $L$. Here we measure the two protocols against each
other.

\Cref{fig:order-vs-estimate} reports the mean number of shots needed to decide
$x<y$ at confidence $0.95$ as a function of the separation $|x-y|$, for three
protocols: the comparator with plain sampling, the comparator with the adaptive
powered readout on register phases, and the estimate-both baseline. All three
series count the same unit, shots. The baseline is given the target error $\varepsilon=|x-y|/2$ at each separation, the coarsest error at which two independent estimates still separate the pair, and runs maximum-likelihood amplitude estimation on the incremental power ladder $m_k=0,1,2,4,\dots,2^{K-1}$ with $K$ the smallest integer such that $2^{K-1}\ge1/\varepsilon$, spending the same $n=\lceil\ln(2(K+1)/\delta)/(2(1/3)^{2})\rceil$ shots at every power and repeating the whole ladder for each of the two values. At $|x-y|=0.01$ this is $K=9$, ten circuits per value and $n=27$ shots per circuit.

The three curves follow the analysis. Plain sampling tracks the shot
bound~\Cref{eq:shots}: its log-log slope is $\numOveSlopePlain$, which matches
$\mathcal{O}(1/|x-y|^{2})$, and it climbs from about fifty shots at the widest
separation to tens of thousands at $|x-y|=0.01$. The estimate-both baseline,
which recovers each value by amplitude estimation~\cite{brassard2002}, is much
flatter in shots, with a log-log slope of $\numOveSlopeEstShots$; its
Heisenberg $\mathcal{O}(1/\varepsilon)$ behaviour does not show in the shot
count but in the depth of the circuits that those shots run, which we price
below. The adaptive powered readout (\Cref{cor:adaptive}) is nearly flat, with
a slope of $\numOveSlopeReg$, as a shot count logarithmic in $1/|x-y|$; the curve is not monotone, because the smallest admissible $t$ rounds each pair onto its own
grid and the rounded separation can exceed or fall short of the nominal one. At the tightest separation, $|x-y|=0.01$, it decides in
$\numOveRegShots$ shots, against $\numOveEstShots$ for the baseline
($\numOveRatioEstShots$ times as many) and $\numOvePlainShots$ for plain
sampling (about $\numOveRatioPlain$ times as many). At the widest separations
the order reverses: above $|x-y|\approx0.3$ plain sampling needs fewer shots
than the schedule, because the schedule spends a full level of shots even when
level $0$ would have declared with fewer, and the baseline is the most
expensive of the three. Every protocol keeps its empirical accuracy at or above
$\numOveAccFloor$, above the design confidence of $\numOveAccTarget$.

This run supports \Cref{cor:adaptive} under a weaker rule than the corollary
assumes. The level budget is $L=9$, the smallest budget meeting the corollary's
second hypothesis $2^{L}u\ge\pi$ at the tightest separation $u=0.01$ (at $L=8$
one has $2^{8}u=2.56<\pi$, and the resolution $\pi2^{-8}\approx0.012$ would
report that pair as a tie), and each level spends $n_0=25$ shots, fewer than
the $n_0>16\ln(2L/\delta_0)=94.2$ that the corollary requires. The schedule
declares as soon as the empirical margin exceeds the Hoeffding radius
$\varepsilon_0=\sqrt{\ln(2L/\delta_0)/2n_0}=0.343$ itself, that is with
$\tau=\varepsilon_0$, so it is a different decision rule, not a cheaper
instance of the corollary's, and its confidence is the empirical floor quoted
above rather than a proven bound. Each sweep point uses the smallest $t$ with
$(b-a)/(2^{t}-1)<|x-y|$, as \Cref{prop:window} prescribes. What the corollary contributes to the
run is the guarantee that no level overshoots the validity band, which is what
lets the schedule run without knowing the separation.

Depth separates the protocols more sharply than shots do. Powering rewrites
the fixed angles of the cascade and adds no gate (\Cref{prop:powering}), so no
level is deeper than level $0$, the cascade of \Cref{fig:comparator-cascade},
of depth $\mathcal{O}(t)$; and since \Cref{prop:window} ties the precision to
the separation through $(b-a)/(2^{t}-1)<|x-y|$, the depth of the comparator
grows like $\log_2(1/|x-y|)$. The deepest circuit of the baseline applies the
Grover operator $\numOveEstMaxPower$ times at $|x-y|=0.01$, in line with the
$\Theta(1/\varepsilon)$ depth of amplitude estimation. Logarithmic against
linear in the inverse separation is an exponential gap in depth, between
protocols that admittedly answer different questions, an order against two
values. Where the coherence budget bounds depth rather than repetitions, that
gap is the one that matters.

\subsection{Thresholding an Image Held in Superposition}\label{sec:app-threshold}
\begin{figure}[t]
\centering
\includegraphics[width=0.7\linewidth]{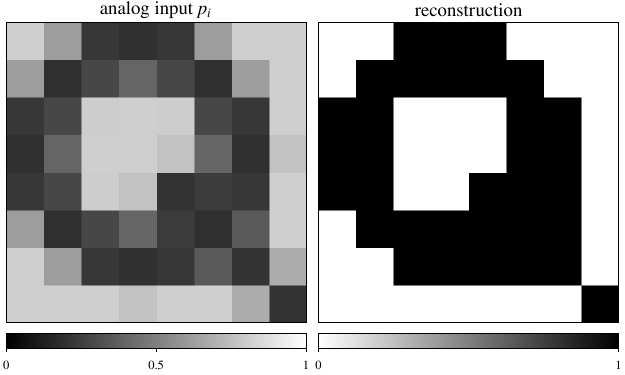}
\caption{Thresholding an image held in superposition. Left: analog input. Center: empirical per-shot
decision accuracy per pixel. Right: binarization reconstructed from the quantum
samples, identical to the classical reference.}
\Description{Three eight-by-eight images side by side: an analog gray-level
input image showing the letter Q, a map of the per-pixel decision accuracy,
and the reconstructed binary image, identical to the classical reference.}
\label{fig:binarization}
\end{figure}

Binarizing an image against a constant threshold is a standard application of
quantum comparators in image processing~\cite{xia2019}. Let the image state be
$\tfrac{1}{\sqrt{M}}\sum_{i}\ket{i}^{6}\ket{\varphi(p_i)}^{t}$ with
$\varphi(p_i)=c\,(p_i-a)$ on $[a,b]=[0,1]$ and $M=64$ pixels: the $8\times8$ gray-level rendering of
the letter Q on the left of \Cref{fig:binarization}, whose levels lie in
$[\numThrLevelLo,\numThrLevelHi]$. The asymmetry of the operands suits the
comparator. The threshold $p_{\mathrm{thr}}=0.5$, with phase
$\varphi(p_{\mathrm{thr}})$, is a single classical parameter: we hard-code it
into the fixed phase gates, and it never enters a register, whereas the pixels
arrive through a cascade. The $t$ fixed-angle controlled-phase gates that
compare one pixel therefore compare all $64$ at once.

After the comparator runs, the state is a superposition over pixels in which
every branch carries its own decision: the comparator qubit of branch $i$ is a
coin biased towards $\ket{1}$ when $p_i<p_{\mathrm{thr}}$ and towards $\ket{0}$
otherwise, with a bias that grows with the powered separation
$r\,|\varphi(p_i)-\varphi(p_{\mathrm{thr}})|$. The algorithm need not measure
at this point. The thresholded pairs remain in superposition, entangled with
the index and pixel registers, and a downstream routine can consume them
coherently: amplitude amplification on the branches whose comparator qubit
reads $\ket{1}$ raises the weight of the pixels below threshold, and a
counting routine on the same flag estimates how many there are.

The demonstration reported here measures instead, because its purpose is to
verify the decisions against the classical binarization. A single comparator
qubit gives one biased coin, so we replicate the readout: we apply the
comparator $k$ times, each application writing into a fresh qubit, and each
shot measures the index register together with those $k$ qubits, whose
majority is the decision for the pixel that the index names. We reduce the
votes classically and need no result qubit.

We take $t=7$, $k=9$ and powering degree $r=6$. The bound $r<\pi/W$ of
\Cref{sec:adaptive} is calibrated on the whole data range and would admit only
$r\le\numPowRmax$ here; but it is the separations the image actually contains
that have to stay inside the band, and the largest of those is $\numThrSep$, so
every $r<\numThrRmax$ is admissible and the first wrapping degree is
$r=\numThrRwrap$, at a powered separation of $\numThrWrap>\pi$. Every powered
separation $r\,|\varphi(p_i)-\varphi(p_{\mathrm{thr}})|$ then stays below
$\numThrPowSep<\pi$, inside the validity band of \Cref{prop:powering}. The
circuit uses $6+t+k=22$ qubits.

\section{Conclusion}\label{sec:conclusion}
This paper presents the Quantum Phase-based Comparator, a phase interferometer
that decides the order of two values encoded as phases, together with two
circuits that implement it, one with hard-coded values and one driven by
registers. The two phases enter a single qubit between two Hadamard gates, a
fixed offset of $-\tfrac{\pi}{2}$ places the center at probability one half, and
the side of one half on which the probability of $\ket{0}$ falls gives the
order. Any strictly increasing map into a phase range narrower than $\pi$
serves as encoding. The two circuits differ only in how
the phases reach the qubit. With hard-coded values the comparison costs one
qubit and five gates. With the register-driven cascade it compares values held
in two $t$-qubit registers at the cost of one qubit beyond those registers and
depth $\mathcal{O}(t)$, and rounding can change the decision only inside a
window of width $(b-a)/(2^{t}-1)$. Because the cascade acts on each basis state of the registers separately, one circuit compares a whole superposition of pairs, and the decisions stay in superposition for a downstream quantum routine.

The comparator returns a biased coin, so resolving close values costs shots,
quadratically in the inverse separation. Because the algorithm sets the phases itself, in the gate parameters or in the
fixed angles of the cascade, scaling both by an integer widens the margin
without adding a gate or a qubit, and the adaptive doubling schedule turns
this amplification into a shot count logarithmic in the inverse separation.

To the best of our knowledge, no earlier quantum circuit decides an order
relation by interfering two independently encoded phases on a single qubit.
The ingredients themselves are not new: the interferometer between two
Hadamard gates is standard~\cite{ekert2002}, and the cascade of controlled
rotations comes from the state-preparation
literature~\cite{mottonen2005,mitarai2019adc}. What is new is their
composition, and in particular the fixed offset that centers the decision at
equality and so turns an interferometer into a comparator.

Three directions follow. The first is a downstream algorithm that takes the
compared superposition as input without measuring it. Amplitude amplification
over the branches whose comparator qubit reads $\ket{1}$ is the simplest
example, and the same flag serves any routine that needs only aggregate
quantities, which amplitude estimation can extract:
norm, trace and variance
estimation~\cite{cade2017quantum,poggiali2026more,bernasconi2024quantum,poggiali2023quantum}
are representative instances, and the comparator pays off there because the
thresholding step leaves the state coherent for them. The second is noise:
every result here is exact or noiselessly simulated, and the margin of
\Cref{sec:cost} is the natural quantity through which depolarizing and readout
errors enter, since both shrink it. The third is the encoding: a value held as
the amplitude of a state can reach the same interferometer through amplitude
estimation and an analog-to-digital conversion~\cite{mitarai2019adc}, a route
that we leave open.

\begin{acks}
The authors used a large language model (Claude, Anthropic) to assist with
drafting and revising the manuscript text.
\end{acks}

\bibliographystyle{ACM-Reference-Format}
\bibliography{references}

@inproceedings{thapliyal2010,
  author    = {Thapliyal, Himanshu and Ranganathan, Nagarajan and Ferreira, Ryan},
  title     = {Design of a Comparator Tree Based on Reversible Logic},
  booktitle = {Proc. 10th IEEE Int. Conf. Nanotechnology (IEEE-NANO)},
  year      = {2010},
  pages     = {1113--1116},
  publisher = {IEEE},
  doi       = {10.1109/NANO.2010.5697872}
}

@inproceedings{vudadha2012,
  author    = {Vudadha, Chetan and Phaneendra, P. Sai and Sreehari, V. and Ahmed, Syed Ershad and Muthukrishnan, N. Moorthy and Srinivas, M. B.},
  title     = {Design of Prefix-Based Optimal Reversible Comparator},
  booktitle = {Proc. IEEE Computer Society Annu. Symp. VLSI (ISVLSI)},
  year      = {2012},
  pages     = {201--206},
  publisher = {IEEE},
  doi       = {10.1109/ISVLSI.2012.49}
}

@article{cade2017quantum,
  title={The quantum complexity of computing Schatten $ p $-norms},
  author={Cade, Chris and Montanaro, Ashley},
  journal={arXiv preprint arXiv:1706.09279},
  year={2017}
}

@inproceedings{poggiali2023quantum,
  title={Quantum Feature Selection with Variance Estimation.},
  author={Poggiali, Alessandro and Bernasconi, Anna and Berti, Alessandro and Del Corso, Gianna M and Guidotti, Riccardo and others},
  booktitle={ESANN},
  year={2023}
}

@article{bernasconi2024quantum,
  title={Quantum subroutine for variance estimation: algorithmic design and applications},
  author={Bernasconi, Anna and Berti, Alessandro and Del Corso, Gianna M and Guidotti, Riccardo and Poggiali, Alessandro},
  journal={Quantum Machine Intelligence},
  volume={6},
  number={2},
  pages={78},
  year={2024},
  publisher={Springer}
}

@article{poggiali2026more,
  title={A more efficient quantum circuit for estimating the variance},
  author={Poggiali, Alessandro and Ju, Jiwon},
  journal={Quantum Machine Intelligence},
  volume={8},
  number={1},
  pages={34},
  year={2026},
  publisher={Springer}
}

@misc{qiskit2024,
      title={Quantum computing with {Q}iskit},
      author={Javadi-Abhari, Ali and Treinish, Matthew and Krsulich, Kevin and Wood, Christopher J. and Lishman, Jake and Gacon, Julien and Martiel, Simon and Nation, Paul D. and Bishop, Lev S. and Cross, Andrew W. and Johnson, Blake R. and Gambetta, Jay M.},
      year={2024},
      doi={10.48550/arXiv.2405.08810},
      eprint={2405.08810},
      archivePrefix={arXiv},
      primaryClass={quant-ph}
}

@article{brassard2000quantum,
  title={Quantum amplitude amplification and estimation},
  author={Brassard, Gilles and Hoyer, Peter and Mosca, Michele and Tapp, Alain},
  journal={arXiv preprint quant-ph/0005055},
  year={2000}
}

@article{oliveira2007,
  author  = {Oliveira, David Sena and Ramos, Rubens Viana},
  title   = {Quantum Bit String Comparator: Circuits and Applications},
  journal = {Quantum Computers and Computing},
  volume  = {7},
  number  = {1},
  pages   = {17--26},
  year    = {2007}
}

@misc{shahzad2023,
  author       = {Shahzad, Khuram and Khan, Omar Usman},
  title        = {A Generalized Space-Efficient Algorithm for Quantum Bit String Comparators},
  year         = {2023},
  eprint       = {2311.06573},
  archivePrefix = {arXiv},
  primaryClass = {quant-ph},
  howpublished = {arXiv:2311.06573},
  doi          = {10.48550/arXiv.2311.06573}
}

@article{yuan2023,
  author  = {Yuan, Yewei and Wang, Chao and Wang, Bei and Chen, Zhao-Yun and Dou, Meng-Han and Wu, Yu-Chun and Guo, Guo-Ping},
  title   = {An improved {QFT}-based quantum comparator and extended modular arithmetic using one ancilla qubit},
  journal = {New Journal of Physics},
  volume  = {25},
  number  = {10},
  pages   = {103011},
  year    = {2023},
  doi     = {10.1088/1367-2630/acfd52}
}

@article{sahin2020,
  author  = {{\c{S}}ahin, Engin},
  title   = {Quantum arithmetic operations based on quantum {Fourier} transform on signed integers},
  journal = {International Journal of Quantum Information},
  volume  = {18},
  number  = {6},
  pages   = {2050035},
  year    = {2020},
  doi     = {10.1142/S0219749920500355}
}

@article{xia2019,
  author  = {Xia, Haiying and Li, Haisheng and Zhang, Han and Liang, Yan and Xin, Jing},
  title   = {Novel multi-bit quantum comparators and their application in image binarization},
  journal = {Quantum Information Processing},
  volume  = {18},
  number  = {7},
  pages   = {229},
  year    = {2019},
  doi     = {10.1007/s11128-019-2334-2}
}

@misc{durr1996,
  author       = {D{\"u}rr, Christoph and H{\o}yer, Peter},
  title        = {A Quantum Algorithm for Finding the Minimum},
  year         = {1996},
  eprint       = {quant-ph/9607014},
  archivePrefix = {arXiv},
  primaryClass = {quant-ph},
  howpublished = {arXiv:quant-ph/9607014}
}

@article{mottonen2005,
  author  = {M{\"o}tt{\"o}nen, Mikko and Vartiainen, Juha J. and Bergholm, Ville and Salomaa, Martti M.},
  title   = {Transformation of quantum states using uniformly controlled rotations},
  journal = {Quantum Information and Computation},
  volume  = {5},
  number  = {6},
  pages   = {467--473},
  year    = {2005},
  eprint  = {quant-ph/0407010},
  archivePrefix = {arXiv},
  primaryClass = {quant-ph}
}

@article{mitarai2019adc,
  author  = {Kosuke Mitarai and Masahiro Kitagawa and Keisuke Fujii},
  title   = {Quantum Analog-Digital Conversion},
  journal = {Physical Review A},
  volume  = {99},
  number  = {1},
  pages   = {012301},
  year    = {2019},
  doi     = {10.1103/PhysRevA.99.012301}
}

@article{ekert2002,
  author    = {Ekert, Artur K. and Alves, Carolina Moura and Oi, Daniel K. L. and Horodecki, Micha{\l} and Horodecki, Pawe{\l} and Kwek, L. C.},
  title     = {Direct Estimations of Linear and Nonlinear Functionals of a Quantum State},
  journal   = {Physical Review Letters},
  volume    = {88},
  number    = {21},
  pages     = {217901},
  year      = {2002},
  doi       = {10.1103/PhysRevLett.88.217901}
}

@misc{ohno2026,
  author       = {Ohno, Hiroshi},
  title        = {Approximate Cosine Similarity Estimation via an Angle-Encoding Hadamard Test},
  year         = {2026},
  eprint       = {2604.15867},
  archivePrefix = {arXiv},
  primaryClass = {quant-ph},
  howpublished = {arXiv:2604.15867}
}

@inproceedings{park2021,
  author    = {Park, Daniel K. and Blank, Carsten and Petruccione, Francesco},
  title     = {Robust Quantum Classifier with Minimal Overhead},
  booktitle = {2021 International Joint Conference on Neural Networks (IJCNN)},
  pages     = {1--7},
  year      = {2021},
  doi       = {10.1109/IJCNN52387.2021.9533403}
}

@article{larose2020,
  author    = {LaRose, Ryan and Coyle, Brian},
  title     = {Robust Data Encodings for Quantum Classifiers},
  journal   = {Physical Review A},
  volume    = {102},
  number    = {3},
  pages     = {032420},
  year      = {2020},
  doi       = {10.1103/PhysRevA.102.032420}
}

@article{schuld2017,
  author  = {Schuld, Maria and Fingerhuth, Mark and Petruccione, Francesco},
  title   = {Implementing a distance-based classifier with a quantum interference circuit},
  journal = {EPL (Europhysics Letters)},
  volume  = {119},
  number  = {6},
  pages   = {60002},
  year    = {2017},
  doi     = {10.1209/0295-5075/119/60002},
  eprint  = {1703.10793},
  archivePrefix = {arXiv},
  primaryClass  = {quant-ph}
}

@misc{ahuja1999,
  author        = {Ahuja, Ashish and Kapoor, Sanjiv},
  title         = {A Quantum Algorithm for finding the Maximum},
  year          = {1999},
  eprint        = {quant-ph/9911082},
  archiveprefix = {arXiv},
  doi           = {10.48550/arXiv.quant-ph/9911082},
  note          = {arXiv:quant-ph/9911082}
}

@misc{kitaev1995,
  author        = {A. Yu. Kitaev},
  title         = {Quantum measurements and the {A}belian {S}tabilizer {P}roblem},
  year          = {1995},
  month         = nov,
  howpublished  = {arXiv:quant-ph/9511026},
  eprint        = {quant-ph/9511026},
  archiveprefix = {arXiv},
  doi           = {10.48550/arXiv.quant-ph/9511026}
}

@article{suzuki2020,
  author        = {Suzuki, Yohichi and Uno, Shumpei and Raymond, Rudy and Tanaka, Tomoki and Onodera, Tamiya and Yamamoto, Naoki},
  title         = {Amplitude estimation without phase estimation},
  journal       = {Quantum Information Processing},
  volume        = {19},
  number        = {2},
  pages         = {75},
  year          = {2020},
  doi           = {10.1007/s11128-019-2565-2},
  eprint        = {1904.10246},
  archivePrefix = {arXiv}
}

@article{grinko2021,
  author        = {Grinko, Dmitry and Gacon, Julien and Zoufal, Christa and Woerner, Stefan},
  title         = {Iterative quantum amplitude estimation},
  journal       = {npj Quantum Information},
  volume        = {7},
  number        = {1},
  pages         = {52},
  year          = {2021},
  doi           = {10.1038/s41534-021-00379-1},
  eprint        = {1912.05559},
  archiveprefix = {arXiv}
}

@incollection{brassard2002,
  author        = {Brassard, Gilles and H{\o}yer, Peter and Mosca, Michele and Tapp, Alain},
  title         = {Quantum Amplitude Amplification and Estimation},
  booktitle     = {Quantum Computation and Information},
  editor        = {Lomonaco, Jr., Samuel J. and Brandt, Howard E.},
  series        = {Contemporary Mathematics},
  volume        = {305},
  pages         = {53--74},
  publisher     = {American Mathematical Society},
  year          = {2002},
  doi           = {10.1090/conm/305/05215},
  eprint        = {quant-ph/0005055},
  archiveprefix = {arXiv}
}

@book{nielsen2010,
  author    = {Michael A. Nielsen and Isaac L. Chuang},
  title     = {Quantum Computation and Quantum Information: 10th Anniversary Edition},
  publisher = {Cambridge University Press},
  address   = {Cambridge, UK},
  year      = {2010},
  doi       = {10.1017/CBO9780511976667},
  isbn      = {978-1-107-00217-3}
}

@misc{berti2025qram,
  author        = {Berti, Alessandro and Ghisoni, Francesco},
  title         = {Efficient Quantum State Preparation with Bucket Brigade {QRAM}},
  year          = {2025},
  eprint        = {2510.16149},
  archivePrefix = {arXiv},
  primaryClass  = {quant-ph},
  doi           = {10.48550/arXiv.2510.16149},
  howpublished  = {arXiv:2510.16149 [quant-ph]}
}

@misc{vandaele2026,
  author        = {Vandaele, Vivien},
  title         = {Asymptotically Optimal Quantum Circuits for Comparators and Incrementers},
  year          = {2026},
  eprint        = {2603.12917},
  archivePrefix = {arXiv},
  primaryClass  = {quant-ph},
  doi           = {10.48550/arXiv.2603.12917},
  howpublished  = {arXiv:2603.12917 [quant-ph]}
}

\end{document}